\documentclass[a4paper,12pt,english]{article}
\usepackage{graphicx,url,amsfonts,amsmath,latexsym,amssymb,amsthm,fixmath}
\usepackage{lineno/lineno,hyperref,ifthen}
\usepackage{color,geometry}
\usepackage{subfig}
\usepackage{multirow}
\usepackage{diagbox}

\geometry{margin=1in,includefoot}

\usepackage[T1]{fontenc}
\usepackage{lmodern}
\usepackage{babel}
\usepackage[babel]{microtype}

\usepackage{hyperref}

\graphicspath{{figures/}}
\def\svgpath{figures/}

\newcommand{\includesvg}[2][]{%
\def\tempa{#1}\def\tempb{}%
\ifx\tempa\tempb\else\let\svgwidth\tempa\fi
\input{\svgpath#2.pdf_tex}
}

\bibliographystyle{alpha}

\newtheorem{conjecture}{Conjecture}
{\theoremstyle{plain}
\newtheorem*{conj}{Conjecture}}
\newtheorem{theorem}{Theorem}
\newtheorem{lemma}{Lemma}
\newtheorem{remark}{Remark}

\def\Z{\mathbb Z}

\title{Some Triangulated Surfaces without Balanced Splitting\thanks{This work has been partially supported by the LabEx PERSYVAL-Lab (ANR-11-LABX-0025-01)}}

\author{Vincent Despr\'e\thanks{%
Gipsa-lab, CNRS, Grenoble, France;
\url{Vincent.Despre@gipsa-lab.fr}}%
\and Francis Lazarus\thanks{%
Gipsa-lab, CNRS, Grenoble, France;
\url{Francis.Lazarus@gipsa-lab.fr}}%
}

\begin{document}
\maketitle

\begin{abstract}
Let $G$ be the graph of a triangulated surface $\Sigma$ of genus $g\geq 2$. A cycle of $G$ is splitting if it cuts $\Sigma$ into two components, neither of which is homeomorphic to a disk. A splitting cycle has type $k$ if the corresponding components have genera $k$ and $g-k$. 
It was conjectured that $G$ contains a splitting cycle (Barnette '1982). We confirm this conjecture for an infinite family of triangulations by complete graphs but give counter-examples to a stronger conjecture (Mohar and Thomassen '2001) claiming that $G$ should contain splitting cycles of every possible type.
\end{abstract}

\section{Introduction}

A \emph{splitting  cycle} on a topological surface is a simple closed curve that cuts the surface into two non-trivial pieces, none of which is homeomorphic to a disk. See Fig.~\ref{fig:cy}. A torus does not have any splitting cycle but any closed surface (orientable or not) of genus at least two admits a splitting cycle. Given a combinatorial surface, that is a cellular embedding of a graph $G$ into a surface $\Sigma$, it is natural to ask whether $G$ contains a cycle that is a splitting cycle in $\Sigma$. Here, a cycle in a graph is a closed walk without any repeated vertex.
\begin{figure}[h]
\begin{center}
\includegraphics[width=.5\linewidth]{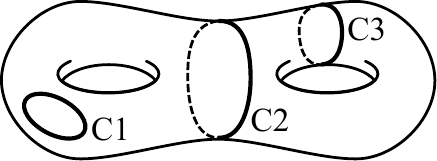}
\caption{A cycle may be null-homotopic (C1) or separating but non null-homotopic (C2) or neither null-homotopic nor separating (C3). C2 is also called a splitting cycle.}
\label{fig:cy}
\end{center}
\end{figure}
It is known to be NP-hard to decide whether a combinatorial surface contains a splitting cycle or not~\cite{art-cabe-find,art-cham-spli}. However, it was conjectured by Barnette that 
\begin{conjecture}[Barnette '1982~\protect{\cite[p. 166]{liv3}}]\label{conj-barn}
Every triangulation of a surface of genus at least 2 has a splitting cycle.
\end{conjecture}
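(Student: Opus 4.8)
The plan is to recast the conjecture inside the cycle space of $G$. Any \emph{vertex-simple} cycle $C$ of $G$ is automatically an embedded circle on $\Sigma$, so $C$ is a splitting cycle exactly when $[C]=0$ in $H_1(\Sigma;\Z)$ (equivalently, $C$ separates $\Sigma$) while $C$ is not null-homotopic: in that case neither of the two pieces of $\Sigma\setminus C$ bounds a disk, so they are essential subsurfaces of genera $k$ and $g-k$ with $1\le k\le g-1$. Hence it suffices to produce a vertex-simple, null-homologous, non-contractible cycle in $G$.

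Two reductions guide the search. \textbf{(a) Irreducible triangulations suffice.} Every triangulation arises from an irreducible one (no contractible edge) by vertex splits, and a vertex split sends a cycle $C$ through the split vertex $v$ to a vertex-simple cycle $C'$ obtained by rerouting $C$ through $v'$, $v''$, or the new edge $v'v''$; since the rerouting stays inside the old star of $v$, $C'$ is isotopic to $C$ on $\Sigma$, and cycles missing $v$ are untouched. Thus the property of having a splitting cycle of type $k$ propagates from irreducible triangulations to all triangulations, and since a fixed surface has only finitely many irreducible triangulations one is reduced to finitely many cases in each genus. \textbf{(b) A band move builds a splitting cycle from a homologous pair.} Suppose $G$ contains disjoint, non-contractible, vertex-simple cycles $\alpha,\beta$ in the same $\Z$-homology class; they cobound an embedded subsurface $R$, and since $g\ge 2$ at least one of $R$ and $R'=\overline{\Sigma\setminus R}$ has positive genus. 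Join $\alpha$ to $\beta$ by an edge-path $\delta$ of $G$ whose interior is disjoint from the arcs of $\alpha\cup\beta$ that will survive, run $\delta$ along the positive-genus side, thicken it to a combinatorial band, and surger $\alpha\cup\beta$ along that band: the outcome is a single vertex-simple cycle $\gamma$ of $G$. An Euler-characteristic count gives that $\gamma$ splits $\Sigma$ into pieces of genera $\operatorname{genus}(R)+1$ and $g-1-\operatorname{genus}(R)$ (band on the $R'$ side) or $\operatorname{genus}(R)$ and $g-\operatorname{genus}(R)$ (band on the $R$ side), and for $g\ge 2$ one of the two choices makes both genera positive, so $\gamma$ is a splitting cycle.

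The crux — and the genuine obstacle — is to guarantee that the $1$-skeleton of an \emph{arbitrary} triangulation really contains such a configuration: two disjoint homologous essential cycles $\alpha,\beta$ together with a surgery path $\delta$ that can be inserted without destroying vertex-simplicity. These data always exist topologically (push an essential curve off itself), but for a very economical triangulation there is no known way to force them into $G$. The natural fallbacks also stall: with $\alpha$ a shortest non-separating cycle, the far boundary walk of a strip of triangles flanking $\alpha$ is homologous to $\alpha$ but need not be vertex-simple; and an induction on $g$ — cut $\Sigma$ along a non-separating cycle $\gamma_0$ of $G$, cap the two new boundary circles by cones over fresh apex vertices to obtain a triangulation of genus $g-1$, apply the statement there (the base case being $g=2$, as the torus has no splitting cycle), and transport the splitting cycle back — founders on keeping the transported curve off the apex vertices and cap triangles and on ensuring it stays simple, essential, and separating after regluing along $\gamma_0$. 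Lacking a uniform mechanism of this kind, and, beyond small genus, without even an explicit list of irreducible triangulations to check directly, we do not settle Conjecture~\ref{conj-barn} in general; instead we confirm it for an explicit infinite family of triangulations by complete graphs and show that the stronger Mohar--Thomassen conjecture, demanding splitting cycles of every type, can fail.
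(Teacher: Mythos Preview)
The statement you were asked to prove is a \emph{conjecture}, and the paper does not prove it either: it is stated as Conjecture~1, surveyed in Section~3, confirmed for the infinite family $M_{12s+7}$ in Section~4 by exhibiting explicit splitting cycles, and otherwise left open. Your proposal is honest about this---you explicitly write that you ``do not settle Conjecture~1 in general''---so there is no false claim, and your concluding sentence (confirm for an infinite family of complete-graph triangulations, disprove the stronger Mohar--Thomassen conjecture) is exactly the paper's contribution.

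Your reduction~(a) to irreducible triangulations is the same observation the paper makes at the start of Section~3.1 (phrased there via edge contraction rather than vertex splitting, but these are inverse operations and the argument is the same). Your reduction~(b), the band-move construction from a pair of disjoint homologous essential cycles, is not in the paper; it is a reasonable topological heuristic, but as you yourself note, the difficulty is entirely in forcing such $\alpha,\beta,\delta$ to live in the $1$-skeleton with the required simplicity, and nothing in your sketch overcomes that. The paper does not attempt any such structural argument: for the family $M_{12s+7}$ it simply writes down an explicit length-$8$ cycle $\gamma_s=(0,5,2,9s+8,6,1,4,5s+6)$, checks by inspection that it bounds a ten-triangle once-punctured torus (Figure~5), and then uses the $\Z_{12s+7}$-action and a fan-shifting trick to produce many more splitting cycles of various types. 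So the paper's ``proof'' for its restricted class is constructive and ad~hoc rather than based on any general reduction principle; your discussion of~(b) is additional speculative content that neither helps nor hurts, since both you and the paper end at the same unresolved status for the general conjecture.
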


Two splitting cycles on a topological surface have the same \emph{type} if there exists a self-homeomorphism of the surface that maps one cycle to the other one. On an orientable surface of genus $g$ there are $\lfloor g/2\rfloor$ possible types corresponding to splittings into components of respective genus $k$ and $g-k$, $1\leq k \leq g/2$.
A stronger version of the above conjecture was later proposed.
\begin{conjecture}[Mohar and Thomassen '2001~\protect{\cite[p. 167]{liv3}}]\label{conj-mohar}
Every triangulation of an orientable surface of genus at least 2 has a splitting cycle of every possible type. 
\end{conjecture}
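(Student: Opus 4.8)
\medskip
\noindent\textbf{Proof proposal.} The statement above is Conjecture~\ref{conj-mohar}; since the abstract announces counter-examples, the plan is in fact to \emph{refute} it while still confirming the weaker Conjecture~\ref{conj-barn} for an infinite family. So I want an infinite family of triangulations of orientable surfaces of genus $g\ge 2$ (with $g$ even, so that the balanced type $g/2$ makes sense) that carry splitting cycles but none of the balanced type. The only systematic source of triangulations with prescribed structure and high symmetry is the list of minimum-genus triangulations of complete graphs $K_n$ arising from Ringel's current-graph machinery: for $n\equiv 0,3,4,7\pmod{12}$ these triangulate the orientable surface of genus $g=(n-3)(n-4)/12$, which exceeds $1$ as soon as $n\ge 12$ and is even along an infinite sub-progression of residues, and they carry a vertex-transitive cyclic symmetry $\Z_n$.

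First I would determine how a splitting cycle can sit inside a triangulation whose $1$-skeleton is complete. If $C$ is a cycle of length $\ell$ separating $\Sigma$ into $\Sigma_1$ and $\Sigma_2$, then no edge joins an interior vertex of $\Sigma_1$ to one of $\Sigma_2$, since such an edge would have to cross $C$; because $K_n$ has \emph{every} edge, one side --- say $\Sigma_2$ --- must contain no interior vertex at all and is therefore spanned by the $\ell$ vertices of $C$. Euler's formula, together with the fact that each boundary edge lies in one triangle and each interior edge in two, then forces $\Sigma_2$ to have exactly $2\ell-3+6k$ edges, where $k\ge 0$ is its genus, whence $2\ell-3+6k\le\binom{\ell}{2}$. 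Consequently a splitting cycle needs $\ell\ge 6$, and a balanced one ($k=g/2$) would need $\ell$ of order $n/\sqrt{2}$ and would cut out an \emph{almost complete} triangulated genus-$(g/2)$ subsurface on those $\ell$ vertices. (The borderline Hamiltonian case $\ell=n$, where both sides have empty interior, is handled the same way through the partition of the chords of $C$ into two chord diagrams of prescribed genera.)

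The core of the refutation is to show that in Ringel's triangulations no vertex subset carries a triangulated subsurface of genus as large as $g/2$. Here I would use the explicit rotation system produced by the current graph: in the index-one cases the rotation at vertex $i$ is a single cyclic list obtained from the current assignment by adding $i$ to every entry, and the $\Z_n$-action reduces statements about an arbitrary vertex set to statements about its difference set. I expect the argument to become a counting or parity obstruction --- showing that the genus of the subsurface spanned by any vertex set lies in a restricted residue class that misses exactly $g/2$ --- possibly after a case split on $n\bmod 12$ (or a finer modulus chosen so the current graph is uniform). Translating the global identity $2\ell-3+6k=|E(\Sigma_2)|$ into a local, checkable property of the current graph is the step I expect to be the main obstacle.

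Finally, to confirm Conjecture~\ref{conj-barn} for the family it suffices to produce one splitting cycle, necessarily of small type. The cleanest target is type~$1$: exhibit six vertices whose mutual edges, together with eight faces of the triangulation, glue up to a once-punctured torus, so that its boundary hexagon is a cycle of the triangulation and hence a splitting cycle; alternatively, build a $\Z_n$-invariant Hamiltonian cycle and read off from the rotation system that its two sides have genera $1$ and $g-1$. Either construction should be routine once the current graph is written down explicitly, which is why I regard the non-existence half --- the absence of a balanced splitting cycle --- as the real content.
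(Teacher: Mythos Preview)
Your framing is right --- the target is refutation --- and several of your structural observations match the paper exactly: the no-interior-vertex lemma for the small side of a splitting (this is the paper's Lemma~\ref{lem:separating} and the first bullet of its Conclusion), the resulting lower bound on the length of a balanced splitting cycle, and the choice of the Ringel--Youngs / Gross--Tucker triangulations of $K_{12s+7}$ as the test family. The positive half is also close in spirit: the paper confirms Conjecture~\ref{conj-barn} by writing down, for each $s\ge 3$, an explicit length-$8$ cycle $\gamma_s$ bounding a ten-triangle once-punctured torus (not a hexagon with eight triangles, but the same idea of exhibiting a small genus-$1$ subsurface read off from the voltage/current description).

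The genuine gap is the non-existence half. You write that you ``expect the argument to become a counting or parity obstruction'' on the current graph, and you flag the translation step as ``the main obstacle.'' That is not a plan; it is a hope, and the paper gives strong evidence that this hope does not materialize. The paper's refutation of Conjecture~\ref{conj-mohar} is \emph{computational}: an exhaustive depth-first traversal of the cycle tree of $K_n$, aggressively pruned by local colour-consistency tests (Section~\ref{algo}), run on the three $K_{19}$ embeddings and on larger $K_n$. No structural or arithmetic obstruction is proved; the authors explicitly ask in the Conclusion whether a computer-free proof exists even for $M_{19}$, and list your Euler-count inequality only as a possible ingredient, not as a proof. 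Moreover the pattern of missing types in Tables~\ref{tab:RY} and~\ref{tab:GT} is irregular across $n$ (e.g.\ $K_{43}$ in the Ringel--Youngs embedding has every type up to $29$, while $K_{28}$ already lacks type~$4$), which argues against a clean residue-class obstruction of the kind you sketch.

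So: keep your Euler/length analysis and your explicit type-$1$ construction, but replace the speculative parity argument by the paper's actual method --- a pruned exhaustive search over cycles through a fixed vertex (using the $\Z_n$-transitivity you already noted), together with the colour-propagation tests that make the search feasible.
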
 

In this article we prove that Conjecture~\ref{conj-barn} holds for the embeddings of the complete graphs $K_{n}$ described by Ringel and Youngs~\cite{liv-ring-map} or by Gross and Tucker~\cite{gt-tgt-87} when $n\equiv 7$ modulo $12$. 
We next present counter-examples to Conjecture~\ref{conj-mohar} that also disprove a stronger conjecture of Zha and Zhao~\cite{art-zha-onno}. Let $M_{19}$ be one of the embeddings of $K_{19}$, the complete graph on 19 vertices, given by Lawrencenko et al.~\cite{lnw-tntos-94}. From the Euler characteristic it is easily seen that $M_{19}$ is a triangulation of  genus $20$. 
A brute force approach to test if any of the cycles of $K_{19}$ is splitting in $M_{19}$ would lead to years of computations. Thanks to a simple branch and cut heuristic we were able to check on a computer that Conjecture~\ref{conj-mohar} fails for  $M_{19}$.  Only 4 of the 10 possible types  occur and, in particular, it is not possible to split $M_{19}$ into two pieces of equal genus. 

After reviewing some terminology and notations we survey the relevant works in  Section~\ref{sec:state}. In Section~\ref{sec:splitting} we provide splitting cycles for the embedding of $K_{12s+7}$ described by Ringel and Youngs~\cite{liv-ring-map}  or Gross and Tucker~\cite{gt-tgt-87}. Our counter-examples to Conjecture~\ref{conj-mohar} are verified thanks to a branch and cut heuristic 
presented in Section~\ref{algo}. The results of our experiments on  $M_{19}$ as well as on other orientable embeddings of bigger complete graphs are discussed in  Section~\ref{res}. 

\section{Backgrounds and terminology}\label{sec:backgrounds}
\paragraph{Combinatorial surfaces} 
A \emph{map} or \emph{combinatorial surface}  is a cellular embedding of a graph in a topological surface. Informally, this is a drawing of a graph on a surface $\Sigma$ such that the edges are drawn as simple non-crossing curves and such that the complement of the graph is a union of topological open disks. If $\Sigma$ is a compact surface with boundary this implies that the boundary of $\Sigma$ is included in the graph. Despite its topological aspect, a map can be encoded by the purely combinatorial data of a rotation system~\cite{liv3}. An \emph{automorphism} of a map is an automorphism of its graph that commutes with the rotation system of the map.
The \emph{genus} $g$ of a map $M$ is the genus of its embedding surface $\Sigma$. Let $V$ and $E$ be the respective number of vertices and edges of the graph $G$ of $M$, and let $F$ be the number of components of the complement of this graph in $\Sigma$.   By Euler's formula, the Euler characteristic $\chi(M):= V-E+F$ is equal to $2-2g$ if $\Sigma$ is orientable and to $2-g$ otherwise.
Another relevant parameter for the existence of a splitting cycle is the \emph{face-width} of $M$; it is the least number of intersections between the embedding of $G$ and any non-contractible cycle on $\Sigma$. 

\paragraph{Triangulations} A map is a \emph{triangulation} if its graph is simple (without loop or  multiple edge) and every face has three sides. We exclude as a triangulation the embedding of a 3-cycle in a sphere. Triangulations are also called \emph{simplicial} triangulations. 
An edge $e$ of a triangulation $T$ can be contracted if the two incident faces are the only 3-cycles to which it belongs and if $T$ is not the embedding of $K_4$ in a sphere. The \emph{contraction} of $e$ results in a triangulation $T'$ obtained by identifying the two vertices of $e$ to a new vertex, deleting $e$ and replacing by single edges the two multiple edges created by the two collapsed triangles. See Figure~\ref{fig:contract}. Note that $T'$ embeds in the same topological surface as $T$. 
A triangulation is \emph{irreducible} when none of its edges can be contracted. Every irreducible triangulation of positive genus has face-width three. Indeed, every edge of such a triangulation belongs to a non-contractible cycle of length three.

\section{State of the art}\label{sec:state}

\subsection{Splitting cycles on Triangulations}\label{subsec:triangle-split}

If $T'$ is obtained from a triangulation $T$ by an edge contraction, then every splitting cycle in $T'$ is the contraction of at least one splitting cycle of the same type in $T$. See Figure~\ref{fig:contract}. 
\begin{figure}[h]
  \centering
\includegraphics[width=.7\linewidth]{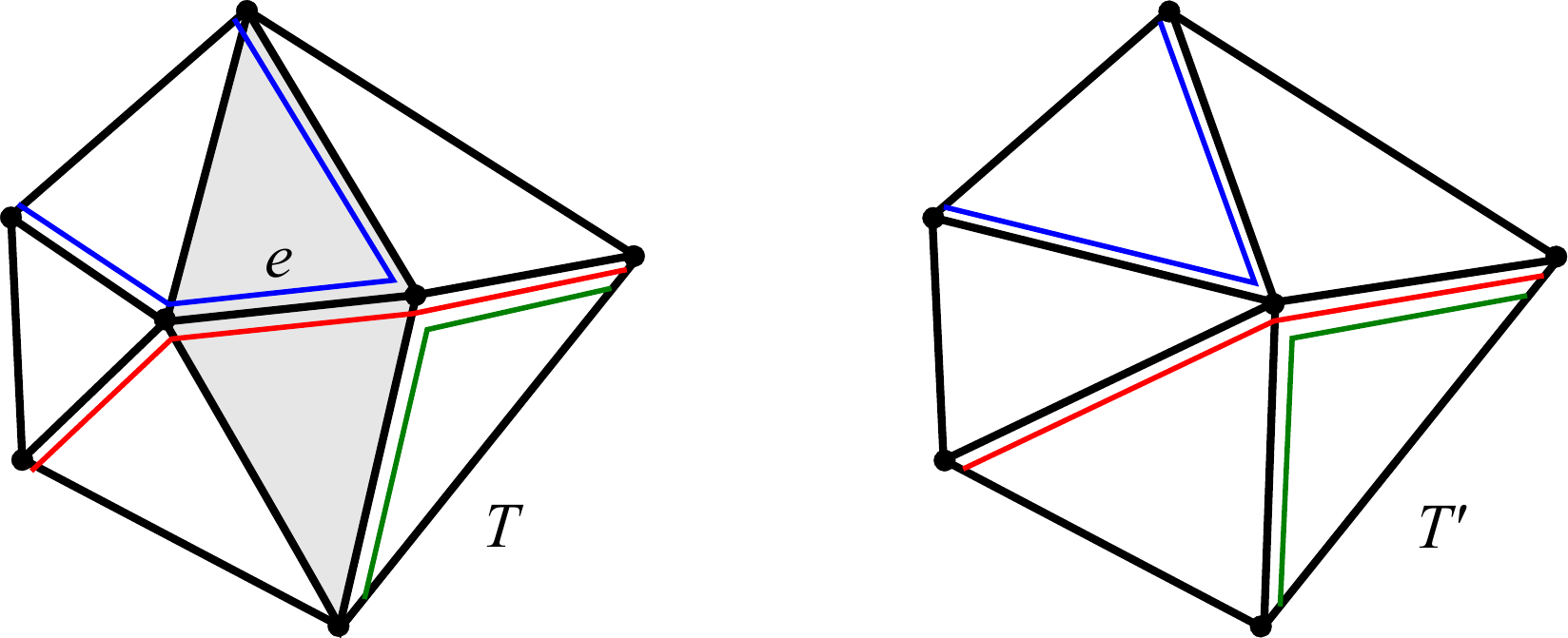}
  \caption{The contraction of edge $e$ in $T$ and the resulting triangulation $T'$. Every simple path on $T'$ is the contraction of (at least) a simple path on $T$.}
  \label{fig:contract}
\end{figure}
It follows that Conjectures~\ref{conj-barn} and~\ref{conj-mohar} can be equivalently restricted to irreducible triangulations. On the other hand, 
Barnette and Edelson~\cite{art-barn-allo,art-barn-all2} proved
that the number of irreducible triangulations of a given surface is finite.  Nakamoto and Ota~\cite{art-nao-note} further showed that the number of vertices in an irreducible triangulation is at most linear in the genus of the
surface. (This result has been extended to surfaces with boundaries by Boulch and al.~\cite{art-bcn-irre}. The best upper bound known to date is due to Joret and Wood~\cite{art-jw-irre} who proved that this number is at most $\max\{13g - 4, 4\}$.) In theory, one can thus list all irreducible triangulations with fixed genus. This makes conjectures~\ref{conj-barn} and~\ref{conj-mohar}  decidable for fixed genus. Indeed, we can consider every irreducible triangulation in turn to test whether one of its cycles is splitting and compute its type as the case may be. Sulanke~\cite{art-sula-gene} describes an algorithm for generating all the irreducible triangulations with given genus and was able to list the irreducible triangulations of the orientable surface of genus 2 and of the non-orientable surfaces up to genus 4. According to Sulanke there are already $396784$ irreducible triangulations of the orientable surface of genus 2 and $6297982$ irreducible triangulations of the non-orientable surface of genus 4.
In practice, the number of irreducible triangulations is growing too fast and the technique cannot be used for higher genera. 
Thanks to its enumeration Sulanke could
conclude by brute force computation that Conjecture~\ref{conj-barn} is true for the orientable surface of genus 2~\cite{art-sula-irre}. A formal and highly technical proof seems to have appeared in Jennings' thesis~\cite{PhDDJ}. Sulanke~\cite{art-sula-irre} also gives a simple counter-example to an extension of Conjecture~\ref{conj-mohar} to the non-orientable case of genus 3. It is constructed from a triangulation of a torus and a triangulation of a projective plane. Remove a triangle from each of those surfaces and glue them along their boundary. Let $C$ be the joining cycle in the resulting non-orientable triangulation of genus 3.  This surface cannot be cut by any cycle  into a perforated Klein bottle and a projective plane since the cycle would  have to cross the length three cycle $C$ at least four times. A similar argument holds when reversing the roles of the torus and the Klein bottle, that is gluing a Klein bottle with a projective plane.
To our knowledge, no progress has been made on conjectures~\ref{conj-barn} and~\ref{conj-mohar} since then.

\subsection{Splitting cycles on maps with large Face-width }
The splitting cycle Conjectures for triangulations have their counterpart for maps with face-width 3:
\begin{conjecture}[Zha and Zhao '1993~\cite{art-zha-onno}]\label{conj-zha1}
Every map of genus at least 2 and face-width at least 3 has a splitting cycle.
\end{conjecture}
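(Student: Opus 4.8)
The plan is an induction on the genus $g$, carried out so as to stay inside the class of maps of face-width at least three. The base case is $g=2$. A tempting shortcut is to refine every face of $M$ into triangles without creating a separating $3$-cycle, obtaining a simplicial triangulation $T$ of the same surface with face-width still at least three, and then to invoke the brute-force verification of Conjecture~\ref{conj-barn} for the orientable genus-$2$ surface (Sulanke~\cite{art-sula-irre}, with the detailed argument in Jennings~\cite{PhDDJ}) together with its non-orientable counterparts. The trouble is that a splitting cycle of $T$ may run through the vertices inserted inside faces of $M$, so it need not be a cycle of $G$; removing this defect is precisely what keeps Conjecture~\ref{conj-zha1} from being an immediate corollary of Conjecture~\ref{conj-barn} or of Sulanke's enumeration. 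So in the base case one should probably argue directly about genus-$2$ maps of face-width at least three, reducing first (largely) to the polyhedral case, where every face is bounded by a cycle, and then analysing the — now quite constrained — combinatorics.

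For the inductive step, assume the statement for every map of genus at least $2$ and less than $g$, and let $M$ have genus $g\geq 3$. If $G$ already carries a non-contractible separating cycle we are done (such a cycle is automatically splitting), so we may assume every non-contractible cycle of $G$ is non-separating; since $g\geq 1$, $G$ does carry a non-contractible, hence non-separating, simple cycle $C$, which I would take shortest, or shortest subject to a homological side condition. Cut $\Sigma$ along $C$ and cap each of the two new boundary circles with a disk; the result is a map $M'$ of genus $g-1\geq 2$ whose graph is $G$ cut along $C$. Granting that $M'$ still has face-width at least three, the induction hypothesis yields a splitting cycle $C'$ of $M'$, which we may take disjoint from the two capping disks. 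If the two disks lie on the \emph{same} side of $C'$ in $M'$, then reinstating $C$ (that is, regluing the two capping circles, which now bound a handle on that one side) turns that side into a surface of genus one larger while leaving the other side untouched, so $C'$ is still separating in $M$ with both sides of positive genus, hence a splitting cycle of $M$. The whole scheme thus reduces to: (i) proving that face-width at least three is preserved by the capping; and (ii) arranging that $C'$ does not separate the two capping disks.

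Points (i) and (ii) are where I expect all the difficulty to sit, and they are, as far as we know, what keeps Conjecture~\ref{conj-zha1} open. For (i), capping a boundary circle can in principle create a short non-contractible curve meeting only the new disk, collapsing the face-width, and this must be excluded by a careful choice of the cutting cycle $C$ — for instance $C$ shortest non-separating, or $C$ chosen together with a face it bounds. For (ii), the weak induction hypothesis only promises \emph{some} splitting cycle of $M'$, and that cycle may well wrap the handle re-created by $C$, i.e. separate the two disks, in which case its preimage in $M$ is merely non-separating; escaping this seems to require either a cleverer choice of $C$ or a strengthened inductive statement (for example, the existence of a splitting cycle avoiding a prescribed non-separating cycle, or of splitting cycles of several prescribed types among the $\lfloor (g-1)/2\rfloor$ possibilities). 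An orthogonal, more hands-on route is homological: since a simple cycle is separating exactly when it is null-homologous over $\Z/2$, one looks directly for a subsurface of $\Sigma$ of genus strictly between $0$ and $g$ whose boundary is a single cycle of $G$, using face-width at least three to route that boundary through the edges of $G$. Carrying either plan to completion is the main obstacle, and we do not know how to do it.
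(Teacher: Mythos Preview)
The statement you are attempting to prove is Conjecture~\ref{conj-zha1}, and the paper does \emph{not} prove it: it is presented as an open problem. The paper only records partial progress due to others --- Zha and Zhao for face-width at least~$6$ (orientable) or~$5$ (non-orientable), Ellingham and Zha for orientable genus~$2$ with face-width at least~$4$, and Robertson and Thomas for non-orientable genus~$2$ with face-width at least~$3$. There is therefore no ``paper's own proof'' against which to compare your proposal.

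Your write-up is honest about this: you yourself flag that points~(i) and~(ii) are ``where I expect all the difficulty to sit'' and conclude that ``we do not know how to do it.'' That is an accurate assessment. The specific obstacles you name are real. For~(i), cutting along a non-separating cycle and capping can genuinely drop the face-width below three; a shortest non-separating cycle does not obviously prevent this. For~(ii), the induction hypothesis gives no control over which side of the splitting cycle the two caps land on, and your suggested strengthenings (a splitting cycle avoiding a prescribed curve, or splitting cycles of several types) are themselves at least as hard as the conjecture --- indeed the paper shows that the ``several types'' version (Conjecture~\ref{conj-zha2}) is \emph{false}. So the inductive scheme, as stated, does not close, and no repair is known.

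In short: this is a proof \emph{outline} with explicitly acknowledged gaps, aimed at a statement that remains open. It should not be labelled a proof.
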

\begin{conjecture}[Zha and Zhao '1993~\cite{art-zha-onno}]\label{conj-zha2}
Every map of genus at least 2 and face-width at least 3 has splitting cycles of every possible type.
\end{conjecture}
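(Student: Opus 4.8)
I would first reduce Conjecture~\ref{conj-zha2} to a statement about triangulations. If the underlying surface is non-orientable of genus $3$ there is nothing left to prove: the irreducible reduction of Sulanke's triangulation recalled in Section~\ref{subsec:triangle-split} is already a map of face-width $3$ missing one of its types, so I would assume the surface $\Sigma$ orientable of genus $g\geq 2$. I would then refine the given map $M$ into a simplicial triangulation $T$ of $\Sigma$, triangulating each face and subdividing wherever a face boundary fails to be a simple cycle. Assuming Conjecture~\ref{conj-mohar}, $T$ carries a splitting cycle $C$ of every type $k$. Rerouting $C$ edge by edge along the boundaries of the faces of $M$ it traverses is a homotopy through disks, so it preserves the free homotopy class of $C$; and since on an orientable surface the free homotopy class of a simple closed curve determines the homeomorphism type of its complement, hence its type, it would remain only to extract from the resulting closed walk a simple cycle of $M$ lying in that class. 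That simplification is the one delicate point of the reduction; modulo it, Conjecture~\ref{conj-zha2} follows from Conjecture~\ref{conj-mohar}, the converse being immediate since irreducible triangulations have face-width three.

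To prove Conjecture~\ref{conj-mohar} I would induct on $g$, peeling off one handle at a time. The base case $g=2$ has a single type and is Conjecture~\ref{conj-barn} in genus $2$, already settled. For the inductive step I would produce a splitting cycle $C$ of type $1$, cutting $\Sigma$ into a one-holed torus and a one-holed surface $B$ of genus $g-1$, cap the hole of $B$ with one new face to get a closed triangulation $\widehat B$ of genus $g-1$, and apply the inductive hypothesis: $\widehat B$ carries splitting cycles of every type $1,\dots,\lfloor (g-1)/2\rfloor$. Each such cycle lies in the $1$-skeleton, hence avoids the capping face and persists in $\Sigma$, where regluing the torus on one of its sides turns a type-$j$ cycle of $\widehat B$ into a type-$j$ cycle of $\Sigma$. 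Together with $C$ this produces every type $1,\dots,\lfloor (g-1)/2\rfloor$ of $\Sigma$. When $g$ is odd this is already the full list $1,\dots,\lfloor g/2\rfloor$; when $g$ is even, however, the \emph{balanced} type $k=g/2$ is still missing, and reaching it would amount to peeling off a genus-$g/2$ chunk in a single step --- that is, to producing the very balanced splitting cycle we want --- so the induction stalls exactly there.

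Besides the bookkeeping gaps already flagged (that the rerouting in the reduction genuinely outputs a simple cycle, and that a type-$1$ splitting cycle always exists to seed the induction --- itself an unproved special case), the main obstacle I foresee is the balanced type $k=g/2$ for even $g$, and I suspect it to be fatal. In the non-orientable world one can force a type to be absent by gluing two pieces along a short cycle and counting intersections (Sulanke's trick), but the orientable analogue backfires: a surface built by gluing two genus-$(g/2)$ triangulations along a triangle already carries a balanced splitting cycle, namely the gluing triangle itself. No natural substitute presents itself, and indeed the branch-and-cut search on the genus-$20$ map $M_{19}$ carried out later in this paper shows that the balanced type $10$ does not occur there; so Conjecture~\ref{conj-zha2} is in fact false, and the induction above cannot be pushed through.
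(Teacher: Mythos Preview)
The statement you were handed is a \emph{conjecture}, and the paper does not prove it --- it \emph{refutes} it. Since triangulations have face-width at least~$3$, Conjecture~\ref{conj-zha2} is stronger than Conjecture~\ref{conj-mohar}, and the computer-verified counter-examples $M_{19}$ in Section~\ref{res} therefore kill both at once. There is no proof in the paper to compare your attempt against; the paper's ``treatment'' of this statement is precisely the exhaustive search of Sections~\ref{algo}--\ref{res} showing that types $5$ through $10$ are absent.

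Your write-up is not really a proof proposal but an exploration that lands, in its last paragraph, on the correct conclusion that the conjecture is false. The inductive strategy you sketch for Conjecture~\ref{conj-mohar} fails exactly where you say: peeling off one handle at a time and invoking genus $g-1$ can never manufacture the balanced type $g/2$, and the paper confirms that this type is genuinely missing in $M_{19}$. Two further remarks on the reduction paragraph: first, what you are attempting is the \emph{non-trivial} direction (Conjecture~\ref{conj-mohar} $\Rightarrow$ Conjecture~\ref{conj-zha2}), opposite to the implication the paper records; second, the step you flag as ``delicate'' is in fact a real obstruction --- rerouting a simple cycle along face boundaries yields a closed walk that is typically self-intersecting, and there is no general mechanism in a face-width-$3$ map to extract from it a \emph{simple} cycle in the same free homotopy class. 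But since both conjectures are false, completing that reduction would only have established an equivalence between two false statements.
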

Since a triangulation has no loop or multiple edge, its face-width is at least 3. The above conjectures are thus stronger than conjectures~\ref{conj-barn} and~\ref{conj-mohar} respectively. It was proved by Zha and Zhao~\cite{art-zha-onno} that a map of genus $\geq 2$ with face-width at least 6 in the orientable case and at least 5 in the non-orientable case has a splitting cycle. Each of their constructions leads to splitting cycles of type $1$, which seems to be the most occurring type.
For maps of genus 2, those conditions were lowered to face-width 4 in the orientable case~\cite{art35} and face-width 3 otherwise~\cite{art-robe-onth}. 
This last case is somehow buried in a paper related to the computation of the genus of a graph. For easy reference, we extract below the result and its proof.
\begin{theorem}[Robertson and Thomas~\cite{art-robe-onth}]
Every non-orientable map of genus 2 and face-width at least 3 has a splitting cycle.
\end{theorem}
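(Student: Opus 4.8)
The non-orientable surface of genus $2$ is the Klein bottle $K$, and on $K$ a cycle of $G$ is a splitting cycle exactly when it is non-contractible, two-sided and separating: by Euler's formula such a cycle must cut $K$ into two Möbius bands, and conversely these are the only cycles with a Möbius band on one (equivalently each) side. My plan is therefore to exhibit, inside $G$, either a two-sided cycle that bounds a Möbius band, or a pair of disjoint one-sided cycles — which on $K$ are automatically ``dual'', and whose configuration then yields a splitting cycle.

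First I would find a one-sided cycle $C$ in $G$: since the embedding is cellular and $K$ is non-orientable, $G$ carries an orientation-reversing closed walk, and a shortest one is a one-sided cycle; face-width at least $3$ forces $|C|\geq 3$. Cutting $K$ along $C$ produces a cellular map on a Möbius band $M$ whose boundary circle $\partial M$ is a cycle of this map projecting two-to-one onto $C$. Recall that in $M$ a non-contractible simple closed curve is either two-sided, hence parallel to $\partial M$, or one-sided, hence parallel to the core.

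The crux is to find a cycle $D$ of the map on $M$ that is non-contractible in $M$ and disjoint from $\partial M$. Granting this: if $D$ is $\partial M$-parallel, then back in $K$ one side of $D$ is a collar of $\partial M$ glued to a Möbius-band neighbourhood of $C$, i.e.\ a Möbius band, so $D$ is a splitting cycle and we are done; if $D$ is core-parallel, then $C':=D$ is a second one-sided cycle of $G$, disjoint from $C$. To produce such a $D$ one argues by contradiction: if every non-contractible cycle of the map on $M$ were forced to meet $\partial M$, then the part of $G$ lying away from $C$ could not wind around $M$, and one could isotope a one-sided curve through the interior of $M$ meeting $G$ in at most two points, contradicting face-width $\geq 3$. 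Turning this intuition into a clean argument — by tracking the rotation system along $\partial M$, or by an innermost-disc argument on the subgraph of $G$ disjoint from $C$ — is the main obstacle, since one must control exactly how much of $G$ can cling to the cutting cycle.

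It remains to treat the case of two disjoint one-sided cycles $C,C'$ of $G$. On the Klein bottle any two disjoint one-sided simple closed curves are dual: $C'$ is isotopic to the core of the Möbius band $K\setminus C$, so $K\setminus(C\cup C')$ is an open annulus $A$ whose core is a splitting curve. Cutting $K$ along $C\cup C'$ gives a cellular map on a closed annulus whose two boundary circles lie in the graph; since the fundamental group of the annulus is infinite cyclic and is generated by cycles of this graph, the graph contains an essential cycle, and a shortest representative of its homotopy class is simple. A face-width argument as in the previous step lets us take this cycle disjoint from the two boundary circles, so that its image in $K$ is a simple cycle isotopic to the core of $A$, hence a splitting cycle. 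Throughout, the hypothesis that the face-width is at least $3$ is used only to guarantee the room needed to route the cycle realizing the splitting class through $G$ without pinning it to the cutting cycles; everything else is routine surface topology.
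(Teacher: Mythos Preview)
Your approach is genuinely different from the paper's, and the step you yourself flag as ``the main obstacle'' is indeed a gap: you need a non-contractible cycle $D$ in the M\"obius band $M$ that avoids $\partial M$, and the contradiction you sketch (``one could isotope a one-sided curve through the interior of $M$ meeting $G$ in at most two points'') is not justified. Even if the interior subgraph carries no non-contractible cycle of $M$, the many edges running from interior vertices to $\partial M$ can force any core curve to cross $G$ repeatedly; nothing you have said bounds that number by two. The same difficulty reappears in your final paragraph, where you need an essential annulus cycle disjoint from \emph{both} boundary circles. As written, both steps are assertions rather than arguments.

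The paper sidesteps all of this by cutting along a \emph{two-sided} simple curve $\ell$ on the Klein bottle---not required to lie in $G$---chosen to minimise the number $k\geq 3$ of intersections with $G$. This yields an annulus with $k$ marked points on each boundary circle; minimality of $k$ together with Menger's theorem produces $k$ vertex-disjoint $G$-paths crossing the annulus. Because the two boundary circles of the annulus are glued back with opposite orientations (this being the Klein bottle), the induced matching of endpoints is an involution $i\mapsto\sigma(k{+}1{-}i)$, which for $k\geq 3$ cannot be the identity and hence has a $2$-cycle $(i,j)$. The two paths indexed by $i$ and $j$ then concatenate to a splitting cycle. The Menger-plus-involution trick replaces your unfinished ``push the cycle off the boundary'' arguments with a one-line parity observation; you might try to salvage your strategy by imitating it after your first cut, but note that cutting a M\"obius band along a graph cycle does not give the same clean annulus picture.
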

In fact, the condition on the face-width can be lowered to the condition that any closed curve homotopic\footnote{Two closed curves are \emph{homotopic} if one can be continuously deformed into the other.}  to some fixed non-separating two-sided simple loop $\ell$
\begin{figure}[h]
  \centering
\includesvg[.85\textwidth]{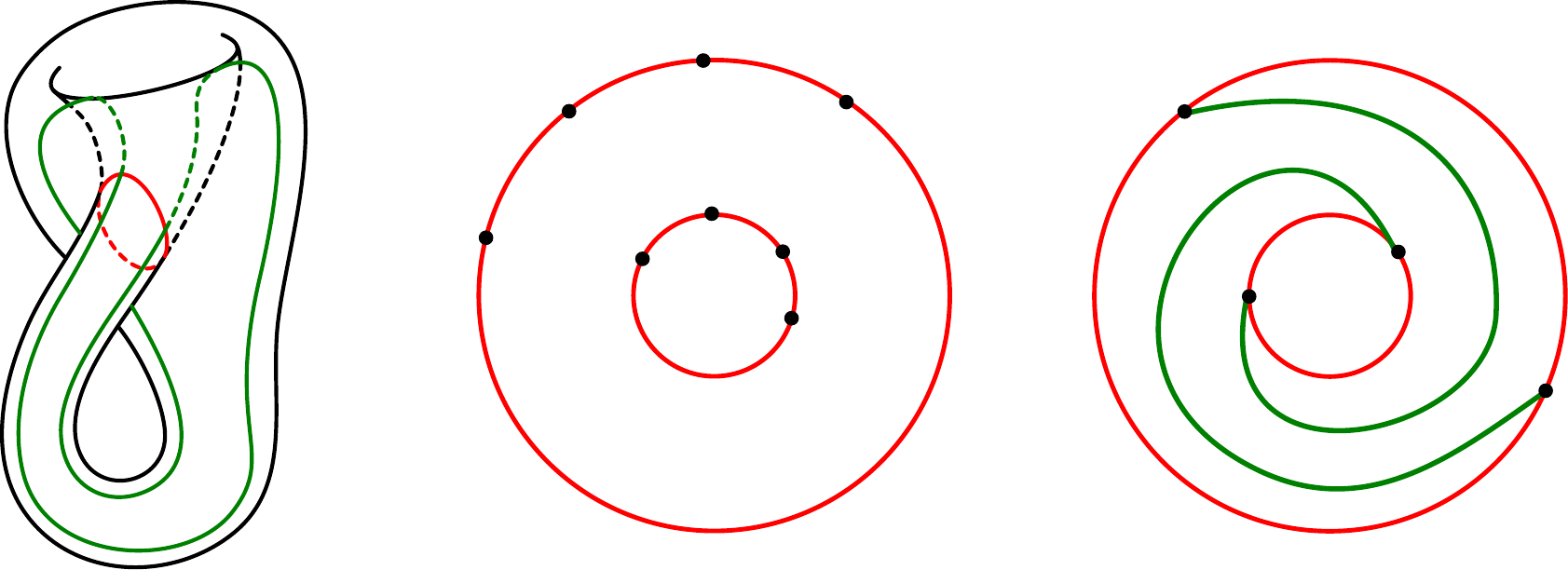}
  \caption{Left, a Klein bottle with a non-separating two-sided curve $\ell$ and a splitting cycle $\alpha$.  Middle, the annulus after cutting along $\ell$. Right, The two paths in the annulus merge to a splitting cycle $\alpha$ on the Klein bottle.}
  \label{fig:KleinBottle}
\end{figure}
intersects the graph of the map at least three times.
\begin{proof}
Consider a map as in the theorem and let $\Sigma$ be the Klein bottle on which its graph is embedded. With a little abuse of notations we write $G$ for the graph as well as its embedding. Choose $\ell$ as above on $\Sigma$ that further minimizes the number $k\geq 3$ of intersections with $G$. After cutting along $\ell$ we get an annulus bounded by two copies $\ell'$ and $\ell''$ of $\ell$ with opposite orientations. See Figure~\ref{fig:KleinBottle}. 
The graph $G$ is also cut by $\ell$, yielding a graph $G'$ with  $k$ vertices $v'_1,\ldots,v'_k$ in cyclic order along $\ell'$ that  correspond to $k$ other vertices $v''_1,\ldots,v''_k$ on $\ell''$.  By the minimal property of $\ell$, every closed curve in the annulus separating $\ell'$ from $\ell''$ cuts $G'$ at least $k$ times. 
By Menger's theorem it easily follows that $G'$ contains $k$ vertex disjoint paths connecting $v'_i$ to $v''_{\sigma(k+1-i)}$ ($i=1\dots k$) for some circular shift $\sigma$ of $[1,k]$. The permutation $i\mapsto \sigma(k+1-i)$ is an involution that cannot be the identity as $k\geq 3$. Any of its 2-cycle $(i,j)$ provides a splitting cycle by merging the path from $v'_i$ to $v''_{j}$ and the path from $v'_{j}$ to $v''_i$. See Figure~\ref{fig:KleinBottle}.
\end{proof}

\subsection{Embeddings of complete graphs}\label{sec:embed}
As noted in Section~\ref{subsec:triangle-split}, one only needs to test conjectures~\ref{conj-barn} and~\ref{conj-mohar} against  irreducible triangulations. Any triangulation whose graph is complete is irreducible. On the other hand, Sulanke's experimentation~\cite{art-sula-gene} on irreducible genus two triangulations suggests that denser graphs, i.e. triangulations with fewer vertices, have longer --- hence potentially fewer --- splitting cycles. It is thus legitimate to confront the conjectures with complete graphs. From the proof of the map color theorem~\cite{liv-ring-map} it is known that for each $n\geq 4$ with $n\equiv 0,3,4$ or 7 modulo 12 the complete graph on $n$ vertices triangulates an  orientable surface (see also~\cite[Sec. 5.1.5]{gt-tgt-87}). A similar result holds for triangulating non-orientable surfaces when $n \not\equiv 2,5$ modulo 6. Describing actual embeddings, i.e., rotation systems, in order to determine the genus of complete graphs was a major achievement of Ringel and Youngs~\cite{liv-ring-map}. Recent studies indicate that the number of non-isomorphic triangulations of a surface by the same complete graph is actually quite high~\cite{art-korz-onth,art-elli-tria,art-gran-2012}. 
Lawrencenko et al.~\cite{lnw-tntos-94} identify three triangular embeddings of the complete graph $K_{19}$
which they prove to be non-isomorphic with the help of a
  computer (see~\cite{art-korz-onth} for a non-computer proof). Each one occurs as a covering of one of the orientable
  genus two \emph{base} maps as shown Figure~\ref{fig:K19Embeddings}. 
    \begin{figure}[h]
  \centering
\includegraphics[width=.85\textwidth]{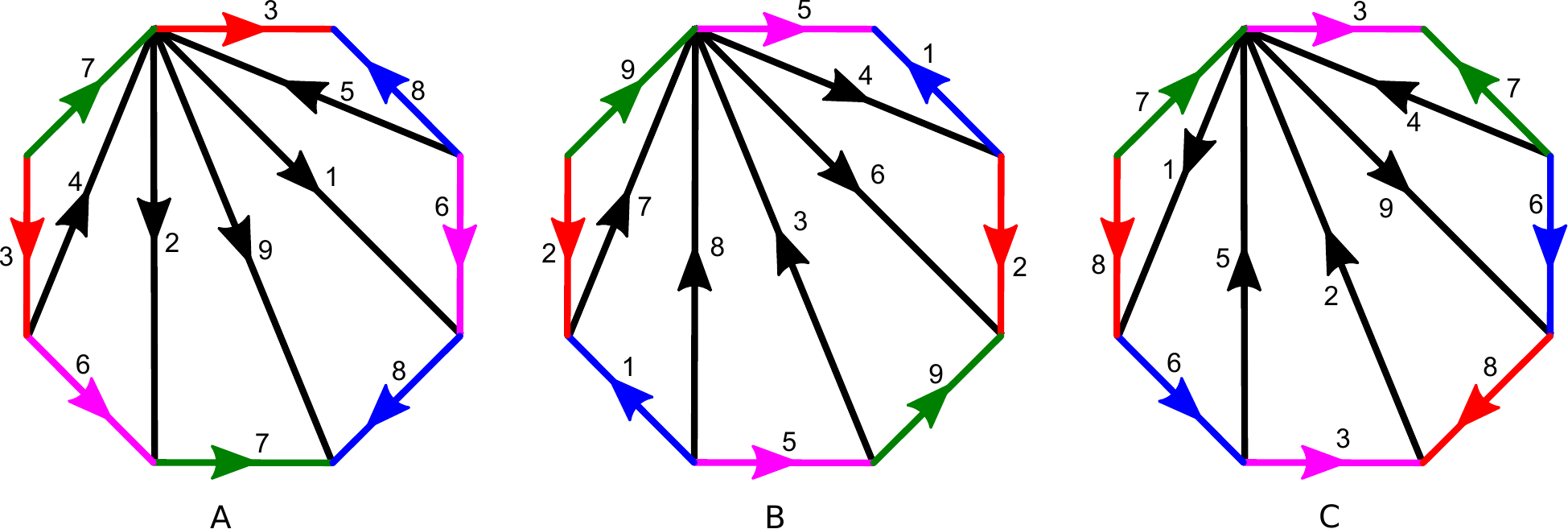}
    \caption{Three genus two maps with voltages in $\Z_{19}$. (The arcs opposite to those represented must receive opposite voltages.) For each octagon the sides should be pairwise identified according to their voltages. This results in each case A, B, C in a map with 9 edges and a single vertex. The corresponding coverings provide triangular embeddings of the complete graph $K_{19}$. Case A is from Ringel and Youngs contruction and case B from Gross and Tucker~\cite{gt-tgt-87}. }
\label{fig:K19Embeddings}
\end{figure}
Those coverings are generated from assignments of the directed edges (or arcs) of each base map to elements of $\Z_{19}$, the cyclic group of order $19$. Two opposite arcs should receive opposite assignments and the sum of the assignments along an oriented facial cycle should be zero, whence the name of \emph{voltage} given to such assignments~\cite[Ch. 2]{gt-tgt-87}. The three voltages in~\cite{lnw-tntos-94} happen to be injective so that the arcs can be identified with their assignment. This leads to the following simple description of each covering. Given one of the base maps with its voltage, we label the vertices of $K_{19}$ with $\Z_{19}$ and declare $(i,j,k)$ to be a triangle iff the three arcs with respective assignments $j-i, k-j$ and  $i-k$ form a facial cycle in the base map. This construction can be generalized to produce embeddings of $K_{12s+7}$ for every positive integer $s$. Gross and Tucker~\cite{gt-tgt-87} use the base map reproduced on Figure~\ref{fig:tenTriangles} as a $4(s+1)$-gon whose sides are pairwised identified according to their voltage. The resulting surface has genus $s+1$ and is covered by a triangular embedding $M_{12s+7}$ of  $K_{12s+7}$ whose genus is $1+s(12s+7)$. As for $M_{19}$, we may identify the vertices of $K_{12s+7}$ with $\Z_{12s+7}$ so that  $(i,j,k)$ is a triangle of $M_{12s+7}$ iff $j-i, k-j$ and  $i-k$ label the three sides of a triangle on the left Figure~\ref{fig:tenTriangles}.

\section{Many splitting cycles}\label{sec:splitting}
 Our implementation for searching splitting cycles lead us to discover that every triangular embedding $M_{12s+7}$ as above has a splitting cycle of type 1. This confirms Conjecture~\ref{conj-barn} for those triangulations. Note that by gluing irreducible triangulations along the boundary of a triangle we may obtain arbitrarily large irreducible triangulations that all possess a splitting cycle. However, the embeddings of complete graphs are not constructed this way and thus provide non-trivial confirmations of Conjecture~\ref{conj-barn}.
For $s\geq 3$, $M_{12s+7}$ has indeed a splitting cycle of length $8$ given by the circular sequence of vertices:
\[ \gamma_s=(0, 5, 2, 9s+8, 6, 1, 4, 5s+6) \]
This cycle bounds a perforated torus made of ten triangles as pictured on the right in
Figure~\ref{fig:tenTriangles}. On this figure the two copies of edge $(2,4)$ should be identified, as well as the two copies of $(0,6)$. 
\begin{figure}[h]
    \centering
\includegraphics[width=\textwidth]{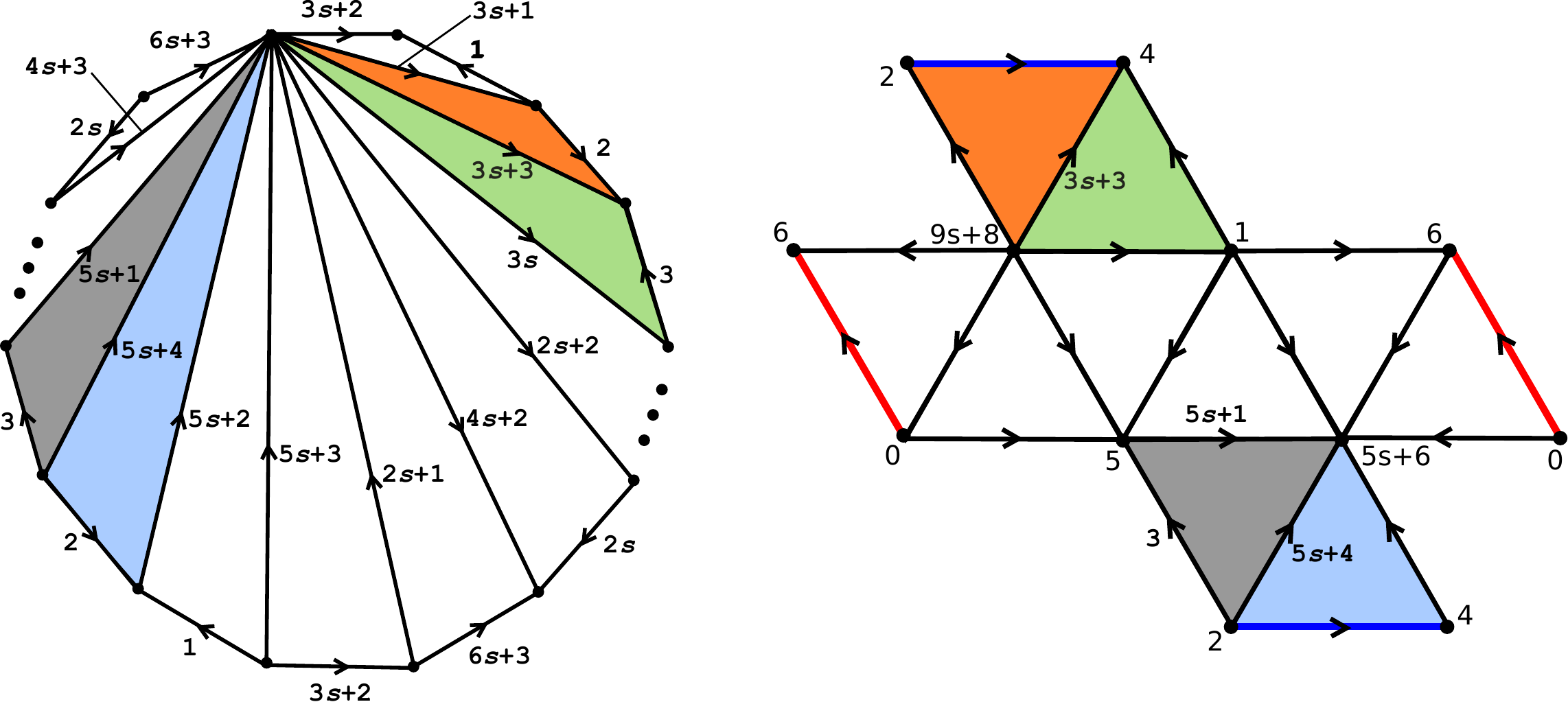}
    \caption{Left, a base map and its voltage for constructing the Gross and Tucker's embedding of $K_{12s+7}$. Note that this base map has a single vertex and is not a simplicial triangulation. Right, the 10 triangles form a sub-surface of genus one with one boundary component in $M_{12s+7}$. Each arc $(u,v)$ is assigned the voltage $v-u$. Hence, the arc $(9s+8,4)$ receives the voltage $3s+3\mod (12s+7)$. Some triangles are colored to show the corresponding covered triangles in the base map.}
\label{fig:tenTriangles}
  \end{figure}
The orbit of $\gamma_s$ by the action of $\Z_{12s+7}$ on $M_{12s+7}$ provides us with $12s+7$ distinct splitting cycles of type 1. Viewing the base map on the left Figure~\ref {fig:tenTriangles} as a fan of $4s+2$ triangles, we can decompose the ten triangles of the perforated torus into two sub-fans of length five in the base map. See Figure~\ref {fig:manySplittings}, left. This construction can be generalized to exhibit many splitting cycles of type 1 by shifting the two fans, leading to the perforated tori as on Figure~\ref {fig:manySplittings}.
\begin{figure}[h]
    \centering
\includegraphics[width=\textwidth]{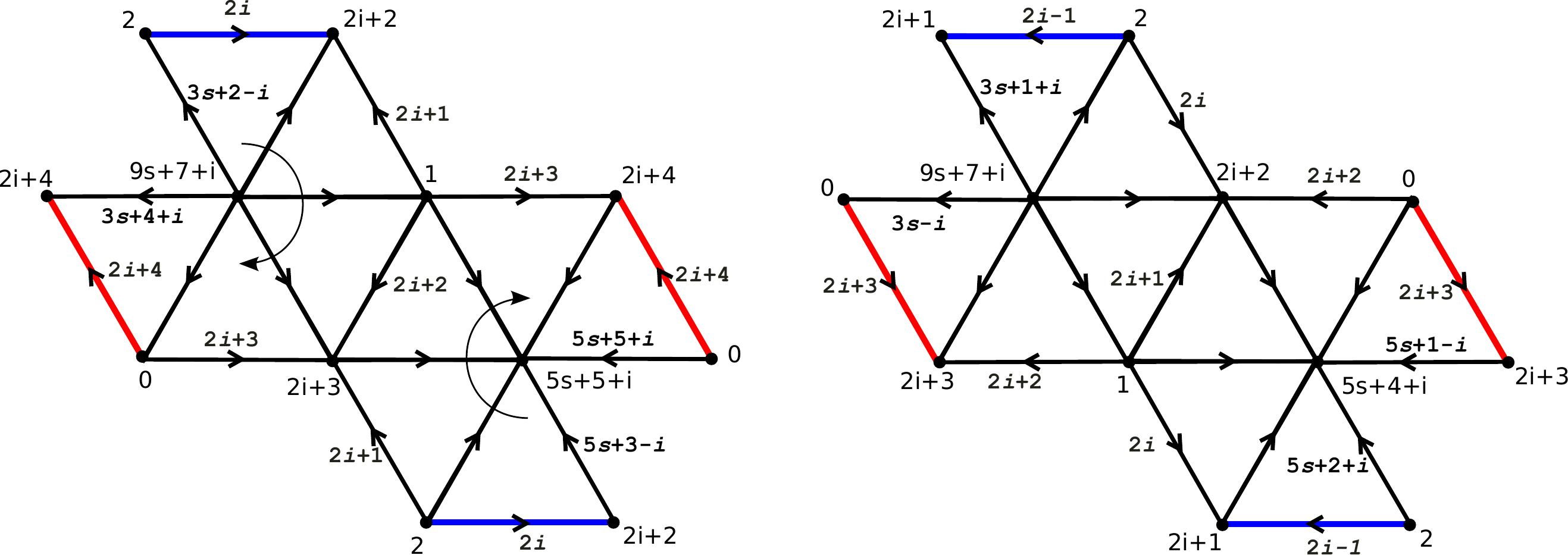}
    \caption{Left, the ten triangles of perforated torus in  $M_{12s+7}$  composed of two sub-fans glued along an edge with even voltage $2i+2$. Right, a similar construction using a gluing edge with odd voltage.}
\label{fig:manySplittings}
  \end{figure}
Each of those cycles can be translated by the action of $\Z_{12s+7}$ to give the following 
$2(s-1)(12s+7)$ distinct splitting cycles of type 1:
\begin{eqnarray*}
  \gamma_{s,i,k} = k+(0, 2i+3, 2, 9s+7+i, 2i+4, 1, 2i+2, 5s+5+i)\\
\gamma'_{s,i,k} = k+(0, 2i+2, 2, 5s+4+i, 2i+3, 1, 2i+1, 9s+7+i)
\end{eqnarray*}
for $i\in[1,s-1]$ and $k\in\Z_{12s+7}$. We can further generalize the construction by gluing larger fans of length $4j+1$ as on Figure~\ref{fig:manySplittings-j} to obtain splitting cycles of type $j$ for $j=1\dots \lceil \frac{s-1}{2}\rceil$.
\begin{figure}[h]
    \centering
\includegraphics[width=.5\textwidth]{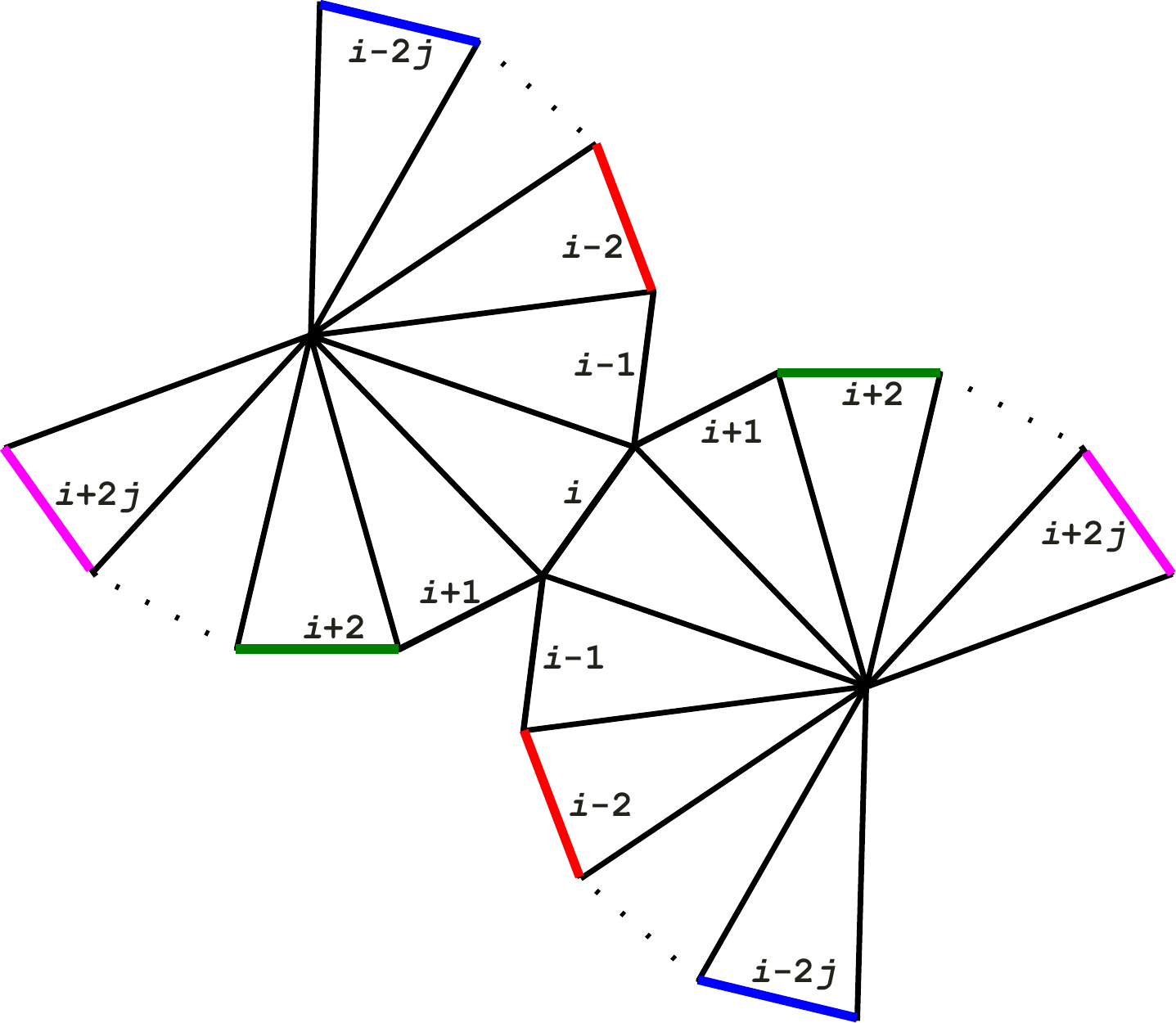}
    \caption{A punctured surface of genus $j$ in $M_{12s+7}$. The thick edges should be glued according to their voltage. }
\label{fig:manySplittings-j}
  \end{figure}
Hence, the embeddings $M_{12s+7}$ constitute an infinite family of irreducible triangulations that admit splitting cycles. 

A similar construction applies to the embeddings of $K_{12s+7}$ by Ringel and Youngs.

\section{Testing algorithm for Conjecture~\ref{conj-mohar}}\label{algo}
Here we provide the details of our implementation for testing Conjecture~\ref{conj-mohar} on any triangular embedding $M_n$ of the complete graph $K_n$ (for a relevant $n$), not necessarily the embedding given by Ringel and Youngs or by Gross an Tucker. 
A straightforward approach consists in checking for every possible cycle in $K_n$  whether it is splitting or not in $M_n$ and computing its type in the former case. This assumes that we can list all the $\sum_{k=3}^n {1\over 2} \binom{n}{k}(k-1)!$ undirected cycles of $K_n$. For $n=19$ this is already more than $9\times 10^{16}$ cycles to test, which is out of reach of current computers. 
\subsection{Pruning the cycle trees}\label{subsec:pruning}
Labelling the vertices of $K_n$ with $\Z_n$, we identify a directed cycle with the sequence of its vertex labels starting with the smallest label. The (directed) cycles can be organized in $n$ rooted trees where the parent of a cycle is obtained by deleting its last vertex. Before exploring those \emph{cycle trees}, we make two simple observations.
\begin{remark}\label{rmk:auto}
  If the automorphism group of $M_n$ acts transitively on the set of its vertices we only need to consider the tree of cycles through vertex $0$. 
\end{remark}
Indeed, an automorphism of $M_n$ does not change the type of a cycle. This remark applies to the three embeddings of $K_{19}$ by Lawrencenko et al.~\cite{lnw-tntos-94} and to all the orientable embeddings of $K_{12s+7}$ by Ringel and Youngs and by Gross and Tucker. In each case $\Z_{12s+7}$ acts transitively by addition on vertex labels. 
\begin{remark}\label{rmk:tr}
  We can assume that a splitting cycle does not contain two consecutive edges bounding a same triangle. We could otherwise replace the two edges by the third one in the triangle. 
\end{remark}
This allows us to decrease by two the degree of the (non root) nodes in the cycle trees. A much more efficient pruning of the cycle trees is provided with the following simple heuristic. Suppose we are given a directed splitting cycle $\gamma$ on $M_n$. We view an edge as a pair of oppositely directed arcs. Color in red or blue all the interior edges of the components of $M_n\setminus \gamma$ respectively to the left or right of $\gamma$. The resulting coloring satisfies that (i) every arc not in $\gamma$ receives the same color as its opposite and (ii) for every vertex, the set of arcs directed inward that vertex is either monochromatic or composed of a red and a blue nonempty sequences separated by two arcs of $\gamma$. This leads to the following coloring test.  As we go down a cycle tree from cycle $(v_0,\dots, v_k)$ to $(v_0,\dots, v_k,v_{k+1})$ we color the arcs pointing outward $v_k$ and distinct from $v_kv_{k-1}$ and from $v_kv_{k+1}$ in red or blue according to whether they lie to the left or right of the directed subpath\footnote{In the non-orientable case, the left and right orientation should be propagated along the path $(v_0,\dots,v_{k+1})$.} $(v_{k-1},v_k,v_{k+1})$. Each time an arc $v_kv$ is colored  we check that
\begin{itemize}
\item $vv_k$ is either colored as $v_kv$ or has not been visited yet, 
\item the arc preceding $v_kv$ around $v$ is not colored with the other color,
\item the arc following $v_kv$ around $v$ is not colored with the other color,
\item if another arc $uv$ has the same color as $v_kv$, then all the arcs entering $v$ with the other color lie on the same side of the path $(v_k,v,u)$.
\end{itemize}
If any of these tests fails the current partial coloring cannot be extended to fulfill the above conditions (i) and (ii). We can thus stop exploring the cycle subtree rooted at $(v_0,\dots, v_{k+1})$. Together with remarks~\ref{rmk:auto} and~\ref{rmk:tr} those simple tests happens to be very effective and to reduce drastically the number of cycles to consider. See Section~\ref{res} for experimental results. 
\subsection{Computing the type of a cycle}
When the cycle $\sigma:=(v_0,\dots, v_{k+1})$ passes the above tests it remains to check if $\sigma$ is splitting and to compute its type. To this end we temporarily color the arcs pointing outward $v_0$  and the arcs pointing outward $v_{k+1}$ in a way similar to that of the other $v_i$, $i=1,\dots,k$. We also perform the above tests and reject the cycle if anyone fails. At this point all the arcs outward the vertices of $\sigma$ have been colored. We say that a vertex of $K_n-\sigma$ is \emph{partially monochromatic} if all its colored inward arcs received the same color.
\begin{lemma}\label{lem:separating}
  $\sigma$ is separating if and only if all the vertices of $K_n-\sigma$ are partially monochromatic and all of the same color.
\end{lemma}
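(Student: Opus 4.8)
The plan is to prove the two directions separately. The forward direction is a short topological remark that hinges on $K_n$ being complete; the backward direction uses that $\sigma$ has already survived the pruning tests, so that the partial red/blue colouring faithfully records a globally consistent two-sided picture of $\sigma$.

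For the forward direction, suppose $\sigma$ separates $\Sigma$ into components $\Sigma_L,\Sigma_R$. The edges of $K_n$ are drawn as pairwise non-crossing arcs meeting $\sigma$ only at endpoints, so every edge between two vertices of $K_n-\sigma$ lies, minus its endpoints, in a single component; since $K_n$ is complete, this forces all of $K_n-\sigma$ into one component, say $\Sigma_L$. For such a vertex $v$ and any $v_i$ on $\sigma$, the arc $vv_i$ minus the point $v_i$ is a connected subset of $\Sigma\setminus\sigma$ containing $v$, hence lies in $\Sigma_L$ and reaches $v_i$ from the left; so every arc $v_i\to v$ is coloured red (``left'' is an unambiguous side of $\sigma$ because $\sigma$ is coherently oriented, with the left/right data propagated along it in the non-orientable case). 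Thus every vertex of $K_n-\sigma$ is partially monochromatic with common colour red.

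For the backward direction, assume every vertex of $K_n-\sigma$ is partially monochromatic with common colour red. Having passed the tests, the partial colouring obeys conditions~(i)--(ii); I extend it by colouring red every edge between two vertices of $K_n-\sigma$, which still obeys~(i)--(ii) since no arc is created at a vertex of $\sigma$. Then no triangle of $M_n$ carries both a red and a blue edge, for two edges of a triangle sharing a vertex $w$ are consecutive in the rotation at $w$, so opposite colours would put them in distinct colour blocks with no $\sigma$-arc between them, contradicting~(ii) at $w$. Hence every face is \emph{red} (it has a red edge — in particular so is every face incident to a vertex of $K_n-\sigma$) or \emph{blue} (it has a blue edge), never both, and the two faces sharing an edge off $\sigma$ take the colour of that edge. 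Consequently the dual graph of $M_n$ with the edges dual to those of $\sigma$ removed has no edge between a red and a blue face, so it disconnects as soon as both colours occur. Were $\sigma$ non-separating, cutting $\Sigma$ along it would leave a connected surface, and hence a connected such dual graph; yet along the boundary circle coming from $\sigma$ only vertices of $\sigma$ appear — a vertex of $K_n-\sigma$ meets $\sigma$ solely from the red side, so its edges attach only to the opposite copy of $\sigma$ — which makes every face incident to that circle blue, whereas every face at a vertex of $K_n-\sigma$ is red: both colours occur yet the dual is connected, a contradiction. Therefore $\sigma$ separates. (If $K_n-\sigma$ is empty, i.e.\ $\sigma$ is Hamiltonian, the right-hand side is vacuous and the same dual-graph argument shows a test-surviving Hamiltonian cycle is separating; a triangular $\sigma$ is trivial.)

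The main obstacle is the backward direction: turning ``$\sigma$ passed the tests'' into the usable statement that the colouring extends to a consistent global two-colouring, and then reading off from that colouring an honest separating subsurface bounded by $\sigma$. Care is needed with the orientation conventions and their propagation in the non-orientable case (so that ``red'' names one fixed side of $\sigma$), with the bookkeeping for faces having one or two sides on $\sigma$, and with the short degenerate cycles.
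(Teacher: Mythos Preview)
Your argument is correct and rests on the same idea as the paper's---the colour of the arcs outward from $\sigma$ partitions the faces into two classes meeting only along $\sigma$---but your packaging of the backward direction is considerably more elaborate. You extend the colouring to all edges, two-colour the faces, and invoke dual-graph connectivity to reach a contradiction, with a separate check for Hamiltonian $\sigma$. The paper instead works only with the minority colour $c$ (the one absent at the vertices of $K_n-\sigma$): since every $c$-coloured arc then joins two $\sigma$-vertices, the triangles whose sides lie in $\sigma$ or carry colour $c$ already assemble into a subsurface with boundary exactly $\sigma$, and the proof is two sentences. Your ``blue faces'' are precisely those $c$-triangles, so the content is the same; the paper's route simply avoids the extension step, the dual graph, and the Hamiltonian case split. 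Your passage about ``the opposite copy of $\sigma$'' is also more tangled than needed---blue faces exist simply because each $\sigma$-edge has a face on its blue side.
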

\begin{proof}
  If $\sigma$ is separating then one of the two components of $M_n\setminus \sigma$ has no interior vertex. Otherwise, there would be an interior vertex in each component. However, those vertices would be connected by an edge of the (complete) graph of $M_n$ leading to a contradiction. The direct implication in the Lemma easily follows. For the reverse implication, suppose that every vertex of $K_n-\sigma$ is partially monochromatic and that none of the arcs pointing to $K_n-\sigma$ has color $c$ for some $c\in \{\text{blue, red}\}$. Then, every arc $a$ with color $c$ must connect two vertices of $\sigma$. The sides of the two triangles of $M_n$ incident to $a$ are thus either in $\sigma$ or have color $c$. It ensues that the set of triangles of $M_n$ each of whose sides is either in $\sigma$ or colored with $c$ forms a subsurface of $M_n$ whose boundary is $\sigma$, proving that $\sigma$ is separating.
\end{proof}
When $M_n$ is orientable and $\sigma$ is separating we can directly compute its type. With the notations of the preceding proof this amounts to compute the genus $g'$ of the $c$-colored component of $M_n\setminus \sigma$. Denote by $A_c$ the number of $c$-colored arcs and let $|\sigma|$ be the number of edges of $\sigma$. Using the Euler characteristic and double counting of the edge-triangle incidences, we easily obtain $g'=(A_c - 2|\sigma| +6)/12$ when the $c$-colored component is orientable and $g'=(A_c - 2|\sigma| +6)/6$ otherwise. 

In practice, we maintain the following information for every vertex of $v\in M_n$: whether it belongs to the current cycle, the number of incident blue edges, the number of incident red edges, and a data-structure to store the colored arcs inward $v$. The updating of this information as well as the color tests can be performed in $O(\log n)$  time per newly colored edge using a simple data-structure that allows to find the next or previous colored arc around $v$ and to insert or remove an arc in $O(\log n)$ time. Traversing the cycle trees in depth first  order our algorithm thus spends $O(n\log n)$ time per cycle in the cycle trees. When $M_n$ is non-orientable we can compute the Euler characteristic of the $c$-colored component in the same amount of time. If the $c$-colored component is orientable the other component must be non-orientable and we are done. Otherwise we may need $O(|K_n|)=O(n^2)$ time to determine the orientability of the other component.

\section{Results}\label{res}
In this section we detail the results from testing our algorithm on embeddings of complete graphs, thus disproving conjectures~\ref{conj-mohar} and~\ref{conj-zha2}. Remark~\ref{rmk:auto} applies to each of the tested embeddings so that we only need to explore the cycle tree rooted at vertex $0$.

\subsection{Embeddings of $K_{19}$}
Our smallest counter-examples to Conjecture~\ref{conj-mohar} are provided by  the three embeddings of $K_{19}$ described in Section~\ref{sec:embed}. We refer to them as A,B and C in accordance with Figure~\ref{fig:K19Embeddings}. A splitting cycle is said to have type $k$ if it cuts the surface into components of respective genus $k$ and $20-k$. The next table shows for each embedding and each type the number NSC of splitting cycles of that type found as we traverse the cycle tree with root vertex 0 and the minimum length of any of those NSC cycles. Note that this minimum length would be the same if we would not take Remark~\ref{rmk:tr} into account for pruning the cycle tree. For instance, we may note that every splitting cycle of type 4 in embedding B is Hamiltonian. 
\begin{center}
\begin{tabular}{c|c|c|c|c|c|c|}
\cline{2-7}
                                        &  Type &  1 &  2 &  3 &  4 &  5-10  \\ 
\hline
\multicolumn{1}{|c|}{\multirow{2}{*}{A}} &  NSC &  450 & 545 & 79 & 18 & 0  \\ \cline{2-7} 
\multicolumn{1}{|c|}{}  &   Min Length &  11 & 14 & 16 & 18 & $\perp$  \\ 
\hline
\multicolumn{1}{|l|}{\multirow{2}{*}{B}} &   NSC &  468 & 494 & 130 & 19 & 0  \\ \cline{2-7} 
\multicolumn{1}{|l|}{}          &  Min Length &  10 & 14 & 18 & 19 & $\perp$  \\ \hline        
\multicolumn{1}{|l|}{\multirow{2}{*}{C}} &   NSC &  355 & 257 & 17 & 36 & 0 \\ \cline{2-7} 
\multicolumn{1}{|l|}{}          &  Min Length &  11 & 15 & 17 & 18 & $\perp$   \\ \hline        
\end{tabular}
\end{center}
None of the three embeddings admits a splitting cycle of type 5 or more, thus disproving Conjecture~\ref{conj-mohar}. In particular, these embeddings cannot be split in a balanced way into two punctured surfaces of genus 10. As a side remark, the fact that the numbers in the table are distinct for the three embeddings is another confirmation that they are not isomorphic~\cite{lnw-tntos-94}. The next table indicates the proportion of contractible and splitting directed cycles among the visited nodes (that passes the color tests of Section~\ref{subsec:pruning}) in the cycle tree. Since every cycle appears with both directions in the cycle tree, the number of splitting cycles is twice the sum in the corresponding NSC row in the previous table. 

\begin{center}
\begin{tabular}{|c|c|c|c|}
\cline{2-4}
\multicolumn{1}{c|}{} &  \# visited nodes &  \# contractible &  \# splitting \\ 
\hline
A & 250221 & 36 & 2164 \\ 
\hline 
B &  244229 & 36 & 2222 \\ 
\hline 
C &  210808 & 36 & 1330 \\ 
 \hline        
\end{tabular}
\end{center}
Hence, thanks to our pruning heuristic, less than $3\times 10^5$ of the $1.8\times10^{17}$ directed cycles of $K_{19}$ are visited. We remark that by a Dehn type argument in the universal cover of the triangulation (see~\cite[Sec. 6.1.3]{liv4}) a contractible cycle that does not contain two consecutive edges of any triangle must be the link (i.e. the boundary of the star) of a vertex. Since every cycle in the cycle tree must contain vertex $0$ this leaves 18 link cycles as contractible cycles and explains the 36 found in each row of the previous table.

\subsection{More counter-examples} \label{subsec:more-counter-samples}
We ran our test algorithm on other embeddings of complete graphs on a quad core laptop with 8Gb RAM. Table~\ref{tab:RY} summarizes our results for the Ringel and Youngs embeddings of $K_n$ with $n\in\{15, 19, 27, 28, 31, 39, 40, 43\}$. Each entry in the table gives the length of the smallest splitting cycle of a given type for some $K_n$, if any. The last row indicates the largest possible type of a splitting cycle in the Ringel and Youngs embedding of $K_n$. It took less than 10 seconds to explore the pruned cycle tree for $K_{19}$, less than one hour for $K_{31}$ and about half a day for $K_{43}$. 
\begin{table}[h]
  \centering
\begin{tabular}{|c|c|c|c|c|c|c|c|c|c|c|}
\hline 
 \diagbox{  Type}{$K_n$} & \textbf{$K_{15}$} & \textbf{$K_{19}$} & \textbf{$K_{27}$} & \textbf{$K_{28}$} & \textbf{$K_{31}$} & \textbf{$K_{39}$} & \textbf{$K_{40}$} & \textbf{$K_{43}$} \\ 
\hline 
\textbf{1} & 8 & 11 & 12 & 12 & 8  & 12 & 10 & 8 \\ 
\hline 
\textbf{2} & 11 & 14 & 16 & 17 & 13  & 15 & 15 & 11 \\ 
\hline 
\textbf{3} & 12 & 16 & 19 & {18} & 15 & 20 & 18 & 12  \\ 
\hline 
\textbf{4} & 13 & 18 & 20 & $\perp$ & 17 &  24 & 19 & 15  \\ 
\hline 
\textbf{5} & 14 & $\perp$ & 27 & $\perp$ & 20 & 26 & 24 & 18\\ 
\hline 
\textbf{6} &  & $\perp$ & $\perp$ & $\perp$ & 21  & 30 & 26 & 20 \\ 
\hline 
\textbf{7} &  & $\perp$ & $\perp$ & $\perp$ & 23  & 32 & 28 & 21 \\ 
\hline 
\textbf{8} &  & $\perp$ & $\perp$ & $\perp$ & 24  & $\perp$ & 30 & 23  \\ 
\hline 
\textbf{9} &  & $\perp$ & $\perp$ & $\perp$ & 28  & $\perp$ & 33 & 24\\ 
\hline 
\textbf{10} &  & $\perp$ & $\perp$ & $\perp$ & 28  & $\perp$ & 35 & 25\\ 
\hline 
\textbf{11} &  &  & $\perp$ & $\perp$ & 29  & $\perp$ & 36 & 27 \\ 
\hline 
\textbf{12} &  &  & $\perp$ & $\perp$ & $\perp$  & $\perp$ & 38 & 29\\ 
\hline 
\textbf{13} &  &  & $\perp$ & $\perp$ & $\perp$ & $\perp$ & 40 & 30 \\ 
\hline 
\textbf{14} &  &  & $\perp$ & $\perp$ & $\perp$ & $\perp$ & $\perp$ & 31 \\ 
\hline 
\textbf{\vdots} &  &  & $\perp$ & $\perp$ & $\perp$ & $\perp$ & $\perp$ & \vdots \\ 
\hline 
\textbf{29} &  &  & & & $\perp$ & $\perp$ & $\perp$ & 42 \\ 
\hline 
\textbf{30} &  &  & & & $\perp$ & $\perp$ & $\perp$ & $\perp$ \\ 
\hline 
\textbf{max type} & 5 & 10 & 23 & 25 & 31 & 52 & 55 & 65 \\
\hline
\end{tabular} 
  \caption{Minimal size of splitting cycles of Ringel and Youngs embeddings according to their type.}
  \label{tab:RY}
\end{table}
We have also tested some of the Gross and Tucker embeddings of $K_{12s+7}$. Some results are listed in Table~\ref{tab:GT}.
\begin{table}[h]
  \centering
\begin{tabular}{|c|c|c|c|}
\hline 
\diagbox{    Type}{$K_n$} & $K_{19}$ & $K_{31}$ & $K_{43}$ \\ 
\hline 
\textbf{1} & 10 & 8  & 8 \\ 
\hline 
\textbf{2} & 14 & 13  & 11 \\ 
\hline 
\textbf{3} & 18 & 17 & 12  \\ 
\hline 
\textbf{4} & 19 & 19 & 15  \\ 
\hline 
\textbf{5} & $\perp$ & 20 & 18\\ 
\hline 
\textbf{6} &  $\perp$ & 25  & 23 \\ 
\hline 
\textbf{7} &  $\perp$ & 26  & 26 \\ 
\hline 
\textbf{8} &  $\perp$ & 26  & 26  \\ 
\hline 
\textbf{9} &  $\perp$  & $\perp$ & 28\\ 
\hline 
\textbf{10} &  $\perp$  & $\perp$ & 34\\ 
\hline 
\textbf{11} &  & $\perp$ & 34 \\ 
\hline 
\textbf{12} &  & $\perp$ & 35\\ 
\hline 
\textbf{13} &  & $\perp$ & 35 \\ 
\hline 
\textbf{14} &  & $\perp$ & $\perp$ \\ 
\hline 
\textbf{max} & 10 & 31 & 65 \\
\hline
\end{tabular} 
  \caption{Minimal size of splitting cycles of Gross and Tucker embeddings according to their type.}
  \label{tab:GT}
\end{table}

\section*{Conclusion}
Our counter-examples to Conjecture~\ref{conj-mohar} were checked with the help of a computer. Can we give a formal proof that would not recourse to a computer (at least for $M_{19}$)? Although we could not find such a proof, three points seem relevant to this purpose. 
\begin{itemize}
\item Suppose there is a splitting cycle of type 10 with $k$ edges in $M_{19}$. As noted in the proof of Lemma~\ref{lem:separating}, on one side of the cut surface we have no interior vertex. Let $f$ and $e$ be the number of faces and edges on this side (the number of vertices is $k$). By double counting incidences we have $3f=2e-k$, and by Euler's formula: $k - e + f = 2-20 -1=-19$. It ensues that $e=2k+57$. Because the graph is simple we also have $e\leq \binom{k}{2}$. This implies $k^2 - 5k -114 > 0$ and in turn $14\leq k\leq 19$. A similar computation shows that a splitting cycle of type half the genus of $M_{12s+7}$ has length at least $(5 +\sqrt{(1+(24s+7)^2)/2})/2$.
\item Every splitting cycle leads to  an arrangement of $12s+7$ splitting cycles thanks to the action of $\Z_{12s+7}$ on $M_{12s+7}$. Is it possible to take advantage of this arrangement to obtain an ``impossible'' dissection of $M_{12s+7}$ when assuming the existence of a splitting cycle of type $\lfloor g(M_{12s+7})/2\rfloor$?
\item Finally, we saw in Section~\ref{sec:splitting} that $M_{12s+7}$ contains many short splitting cycles. On the other hand, the first above point tells that a splitting cycle $\gamma$ of type half the genus is relatively long, hence must cut many of the short splitting cycles. Being separating, $\gamma$ has to cut every other cycles an even number of times. Would this enforce $\gamma$ to have length larger than $12s+7$, leading to a contradiction?
\end{itemize}

Our counter-examples show that it is not always possible to split a genus $g$ triangulation into two genus $g/2$ triangulations. Looking at the tables in Section~\ref{subsec:more-counter-samples}, it seems that the proportion of the types of the splitting cycles of $M_{12s+7}$ is roughly decreasing as $s$ grows. This leads to the following conjecture in replacement of Conjecture~\ref{conj-mohar}.
\begin{conj}
For any positive real number $\alpha \leq 1/2$, there exists a triangulation (or a graph embedding of face-width at least 3) of arbitrarily large genus $g$ that has no splitting cycle of type larger than $\alpha g$.
\end{conj}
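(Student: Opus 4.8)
The plan is to prove the conjecture, in fact for every $\alpha>0$ at once, by showing that the Gross and Tucker embeddings $M_{12s+7}$ of Section~\ref{sec:splitting} have no splitting cycle of type larger than $\varepsilon_s\, g(M_{12s+7})$ with $\varepsilon_s\to 0$; since $g(M_{12s+7})=1+s(12s+7)$ is unbounded, this already suffices, and if one insists on realizing \emph{every} sufficiently large genus rather than just the values $g(M_{12s+7})$, one attaches $O(\sqrt g)$ extra handles inside one fixed triangle, which changes the type of any splitting cycle by only $O(\sqrt g)$. (Our tables support the asymptotic $\varepsilon_s\to 0$; the weaker bound $\varepsilon_s\le\alpha$ for $s\ge s_0(\alpha)$ would give the statement exactly as phrased.)

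The whole problem thus reduces to bounding the maximum type of a splitting cycle of $M_{12s+7}$. The first ingredient is the Euler-characteristic estimate of the first bullet above: if $\gamma$ is such a cycle of type $k$ and length $\ell$ then, the graph being complete, at most one of the two sides of $\gamma$ can carry a vertex off $\gamma$ (two such vertices on opposite sides would be joined by an edge crossing $\gamma$), so the genus-$k$ side has none; counting edge--triangle incidences and applying Euler's formula gives that this side has exactly $2\ell+6k-3$ edges among its $\ell$ boundary vertices, whence $2\ell+6k-3\le\binom{\ell}{2}$ and $\ell\ge(5+\sqrt{48k+1})/2$. Together with the trivial bound $\ell\le 12s+7$ this unfortunately only bounds $k$ by roughly $12s^2$, no better than the a priori $k\le g(M_{12s+7})/2$: a balanced separator is simply not excluded by this count, consistently with the fact that the smallest counter-example $M_{19}$ could be verified only by computer.

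The new input I would bring in is the transitive action of $\Z_{12s+7}$, following the last two bullets above. A splitting cycle $\gamma$ of large type is long by the previous paragraph, and its $12s+7$ translates $\gamma+t$ are again splitting cycles of the same type; being separating, $[\gamma]=0$ in $H_1(\Sigma;\Z/2)$, so $\gamma$ meets each translate $\gamma+t$ an even number of times, whereas each vertex of $\gamma$ lies on exactly $\ell$ of the translates, forcing a dense and highly structured overlap pattern. The aim is to convert this into an impossible dissection of $\Sigma$: either one counts, via Euler's formula, the regions cut out by a carefully chosen sub-collection of the translates and reaches a contradiction, or one argues on the $\Z/2$ (or integral) intersection form that an $\ell$ of order only $s$ already produces more independent separating classes than a genus $\Theta(s^2)$ surface can support, so that $\ell$, hence $k$, cannot grow linearly in $g(M_{12s+7})$. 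Making this step precise is the main obstacle: the counting of the second paragraph is scale-invariant and blind to the arithmetic of $\Z_{12s+7}$, so something genuinely new about the way translates of a long separating cycle must interlock is required. I would first test the argument on $M_{19}$, where a single type ($k=10$, with $14\le\ell\le 19$) must be excluded against a $19$-fold symmetric family of translates; a proof there would simultaneously give a computer-free disproof of Conjecture~\ref{conj-mohar}. Should the symmetric argument prove too rigid, a fallback is to search among other vertex-transitive triangulations with $\Theta(\sqrt g)$ vertices for a cleaner cycle structure, or to attempt a probabilistic model, proving that a random high-genus triangulation has no short balanced separating cycle and --- the delicate point --- no long one either.
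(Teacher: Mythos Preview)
The statement you are trying to prove is not a theorem in the paper; it is explicitly posed as an open \emph{conjecture} in the conclusion, motivated by the numerical data in Tables~\ref{tab:RY} and~\ref{tab:GT}. The paper offers no proof, only the three heuristic bullet points you have correctly absorbed into your proposal.

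Your proposal is likewise not a proof, and you say so yourself. The only rigorous content is the Euler-characteristic inequality $2\ell+6k-3\le\binom{\ell}{2}$ together with $\ell\le 12s+7$, which, as you observe, gives $k\lesssim 12s^2$ --- no better than the trivial bound $k\le g(M_{12s+7})/2=(1+s(12s+7))/2$. Everything after that is programmatic: you propose to exploit the $\Z_{12s+7}$-translates of a putative long separating cycle, but you do not carry out any of the suggested counts (regions of a dissection, ranks in the $\Z/2$ intersection form), and you explicitly flag ``making this step precise'' as ``the main obstacle.'' The parity observation that $\gamma$ meets each translate $\gamma+t$ evenly is correct but by itself does not bound anything. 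The fallback suggestions (other vertex-transitive families, random triangulations) are research directions, not arguments.

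In short: there is no proof in the paper to compare against, and your text does not supply one either. What would be needed to turn this into a proof is exactly the missing ingredient you identify --- a concrete combinatorial or homological obstruction showing that a separating cycle in $M_{12s+7}$ of type $\Theta(s^2)$ forces an impossible configuration of its translates. Until that step is made precise (even for $M_{19}$, which would already give a computer-free disproof of Conjecture~\ref{conj-mohar}), the statement remains open.
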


\newcommand{\etalchar}[1]{$^{#1}$}

\end{document}